\newtheorem*{lemma*}{Lemma}
\newtheorem{proposition}{Proposition}
  \def\cC{{\mathcal{C}}}
 \def\cN{{\mathcal{N}}}
\def\b0{{\pmb{0}}} 
\def\ba{{\mathbf{a}}}   
 \def\bff{{\mathbf{f}}}  \def\bh{{\mathbf{h}}}
  \def\bs{{\mathbf{s}}} 
\def\bu{{\mathbf{u}}} \def\bv{{\mathbf{v}}} \def\bw{{\mathbf{w}}} 
 \def\bz{{\mathbf{z}}}  
\def\bA{{\mathbf{A}}}   
 \def\bF{{\mathbf{F}}}  \def\bH{{\mathbf{H}}}
\def\bI{{\mathbf{I}}}  \def\bK{{\mathbf{K}}}
\def\underbracex#1#2{\mathop{\vtop{\m@th\ialign{##\crcr
   $\hfil\displaystyle{#2}\hfil$\crcr
   \noalign{\kern3\p@\nointerlineskip}%
   #1\crcr\noalign{\kern3\p@}}}}\limits}
\def\underbracea{\underbracex\upbracefilla}
\def\upbracefilla{$\m@th \setbox\z@\hbox{$\braceld$}%
  \bracelu\leaders\vrule \@height\ht\z@ \@depth\z@\hfill 
\kern\p@\vrule \@width\p@\kern\p@\vrule \@width\p@\kern\p@\vrule \@width\p@
$}
\def\upbracefillb{$\m@th \setbox\z@\hbox{$\braceld$}%
\vrule \@width\p@\kern\p@\vrule \@width\p@\kern\p@\vrule \@width\p@\kern\p@
 \leaders\vrule \@height\ht\z@ \@depth\z@\hfill\bracerd
  \braceld\leaders\vrule \@height\ht\z@ \@depth\z@\hfill
\kern\p@\vrule \@width\p@\kern\p@\vrule \@width\p@\kern\p@\vrule \@width\p@
$}
\def\upbracefillc{$\m@th \setbox\z@\hbox{$\braceld$}%
\vrule \@width\p@\kern\p@\vrule \@width\p@\kern\p@\vrule \@width\p@\kern\p@
\leaders\vrule \@height\ht\z@ \@depth\z@\hfill
\kern\p@\vrule \@width\p@\kern\p@\vrule \@width\p@\kern\p@\vrule \@width\p@
$}
\def\underbraced{\underbracex\upbracefilld}
\def\upbracefilld{$\m@th \setbox\z@\hbox{$\braceld$}%
\vrule \@width\p@\kern\p@\vrule \@width\p@\kern\p@\vrule \@width\p@\kern\p@
 \leaders\vrule \@height\ht\z@ \@depth\z@\hfill\braceru$}
\begin{document}

\bstctlcite{IEEEexample:BSTcontrol}

\title{\huge Robust Transmission Design for Active RIS-Aided Systems}


\author{\IEEEauthorblockN{Jinho Yang, Hyeongtaek Lee, and Junil Choi}
    \thanks{This work was supported in part by LG Electronics Inc.; in part by Institute of Information \& communications Technology Planning \& Evaluation (IITP) grant funded by the Korea government(MSIT) (No. RS-2024-00395824, Development of Cloud virtualized RAN (vRAN) system supporting upper-midband); in part by Institute of Information \& communications Technology Planning \& Evaluation (IITP) under Open RAN Education and Training Program(IITP-2025-RS-2024-00429088) grant funded by the Korea government(MSIT); and in part by Institute of Information \& communications Technology Planning \& Evaluation (IITP) grant funded by the Korea government(MSIT) (No.2021-000269, Development of sub-THz band wireless transmission and access core technology for 6G Tbps data rate).}
    \thanks{Copyright (c) 2025 IEEE. Personal use of this material is permitted. However, permission to use this material for any other purposes must be obtained from the IEEE by sending a request to pubs-permissions@ieee.org.}
    \thanks{Jinho Yang and Junil Choi are with the School of Electrical Engineering, Korea Advanced Institute of Science and Technology, Daejeon 34141, South Korea (e-mail:\{dwplo3479; junil\}@kaist.ac.kr).}
    \thanks{Hyeongtaek Lee is with the Department of Electronic and Electrical Engineering, Ewha Womans University, Seoul 03760, South Korea (e-mail: htlee@ewha.ac.kr).}}


\maketitle

\begin{abstract}
Different from conventional passive reconfigurable intelligent surfaces (RISs), incident signals and thermal noise can be amplified at active RISs.
By exploiting the amplifying capability of active RISs, noticeable performance improvement can be expected when precise channel state information (CSI) is available.
Since obtaining perfect CSI related to an RIS is difficult in practice, a robust transmission design is proposed in this paper to tackle the channel uncertainty issue, which will be more severe for active RIS-aided systems.
To account for the worst-case scenario, the minimum achievable rate of each user is derived under a statistical CSI error model.
Subsequently, an optimization problem is formulated to maximize the sum of the minimum achievable rate.
Since the objective function is non-concave, the formulated problem is transformed into a tractable lower bound maximization problem, which is solved using an alternating optimization method.
Numerical results show that the proposed robust design outperforms a baseline scheme that only exploits estimated CSI.
\end{abstract}

\renewcommand\IEEEkeywordsname{Index Terms}
\begin{IEEEkeywords}
	Active reconfigurable intelligent surface (RIS), robust transmission design, statistical channel state information (CSI) error.
\end{IEEEkeywords}

\section{Introduction}

Reconfigurable intelligent surface (RIS) has recently attracted a lot of attention as a new technology that can increase spectral and energy efficiency at a low cost \cite{Intro1, Intro2, Intro3}.
Conventional RIS is a planar surface equipped with a large number of passive scattering elements that can control the phase of incident signals.
While signal propagation environments can be adjusted as desired by properly operating RIS, performance improvements may be limited due to the double fading attenuation \cite{Pathloss}.
To overcome this, utilizing an active RIS is recently proposed that each RIS element is constructed by integrating a reflection-type amplifier into the conventional passive reflective element.
With the addition of the amplifier, the active RIS can adjust the phase of incoming signals and even amplify their amplitudes \cite{Active1, Active2}.

By exploiting its amplification capability, extensive researches on active RIS-aided systems have been conducted. 
A sum-rate maximization problem was formulated in \cite{Active2} and solved by adopting alternating optimization (AO) and fractional programming methods.
In \cite{Active4}, theoretical performance comparison between active RIS-aided and passive RIS-aided systems was analyzed under the same power budget.
An energy-constrained wireless network aided by the active RIS was investigated with different multiple access schemes in \cite{Active5}.
Although the existing works made considerable performance improvements, they all assumed perfect channel state information (CSI) \cite{Active1, Active2, Active3, Active4, Active5}.
However, due to inevitable estimation error, obtaining accurate CSI is difficult in general.
Moreover, the signal amplification characteristic of the active RIS increases the impact of the channel mismatch, which can lead to significant performance loss.
Therefore, to exploit the active RIS, it is necessary to consider the CSI estimation error when designing a transmit beamformer at a base station (BS) and reflection coefficients at an RIS.

Although several works proposed robust designs in passive RIS-aided systems \cite{Robust1, Robust2, Robust3, Robust4, Robust5}, these designs cannot be directly applied to active RIS-aided systems because the noise amplification through the RIS and the power limitation at the RIS must be additionally taken into account.
To address this issue, a few recent works have developed novel robust transmission designs for active RIS-aided systems under the imperfect CSI assumption \cite{Active_Robust1, Active_Robust2, Active_Robust3}.
Both the average achievable rate maximization problem and the average transmit power minimization problem were investigated in \cite{Active_Robust1}.
The minimum signal-to-interference-plus-noise ratio was maximized also considering hardware impairments in \cite{Active_Robust2}.
In \cite{Active_Robust3}, energy efficiency was maximized by employing a deep reinforcement learning method.
However, the previous works did not consider the BS-user direct link itself or its estimation error, which should be taken into account in practice.

In this paper, a robust design for the transmit beamformer at the BS and the reflection coefficients at the RIS is developed.
To consider the worst-case, for the first time in active RIS-aided systems to the best of our knowledge, we derive the minimum achievable rate with a popular statistical CSI error model.
An optimization problem is formulated to maximize the sum of the minimum achievable rate.
Since the problem is non-convex, we reformulate it by replacing the objective function with a tractable lower bound and then apply the AO method.
Numerical results demonstrate that the proposed robust design achieves better sum-rate performance than a baseline non-robust case, which does not deal with the estimation error.

\textbf{Notation:} Lower and upper boldface letters denote column vectors and matrices.
$\bA^{\mathrm{T}}$ and $\bA^{\mathrm{H}}$ represent the transpose and the conjugate transpose of a matrix $\bA$.
$\mathrm{Tr}(\bA)$ denotes the trace of a square matrix $\bA$.
The $\ell_2$-norm of a vector $\ba$ and the Frobenius norm of a matrix $\bA$ are denoted by $\Vert \ba \Vert_2$ and $\Vert \bA \Vert_\mathrm{F}$.
The diagonal matrix whose diagonal elements are each entry of a vector $\ba$ is represented by $\mathrm{diag}(\ba)$.
The $m \times 1$ all zero-vector and the $m \times m$ identity matrix are represented by $\mathbf{0}_m$ and $\bI_m$.
$\cC\cN(\boldsymbol{\mu}, \bK)$ denotes the complex Gaussian distribution with mean $\boldsymbol{\mu}$ and covariance $\bK$.

\section{System Model}
As shown in Fig.~\ref{Fig:System_model}, we consider a multi-user multiple-input single-output (MU-MISO) downlink system aided by an active RIS.
The BS deploys $N$ antennas to serve $K$ single-antenna users.
The active RIS consists of $M$ reflective elements, each of which can independently reflect the incoming signal with phase shift and amplification. 
We assume the active RIS is connected to the BS through a control link.
\begin{figure}[t]
    \centering
    \includegraphics[width=0.8\columnwidth]{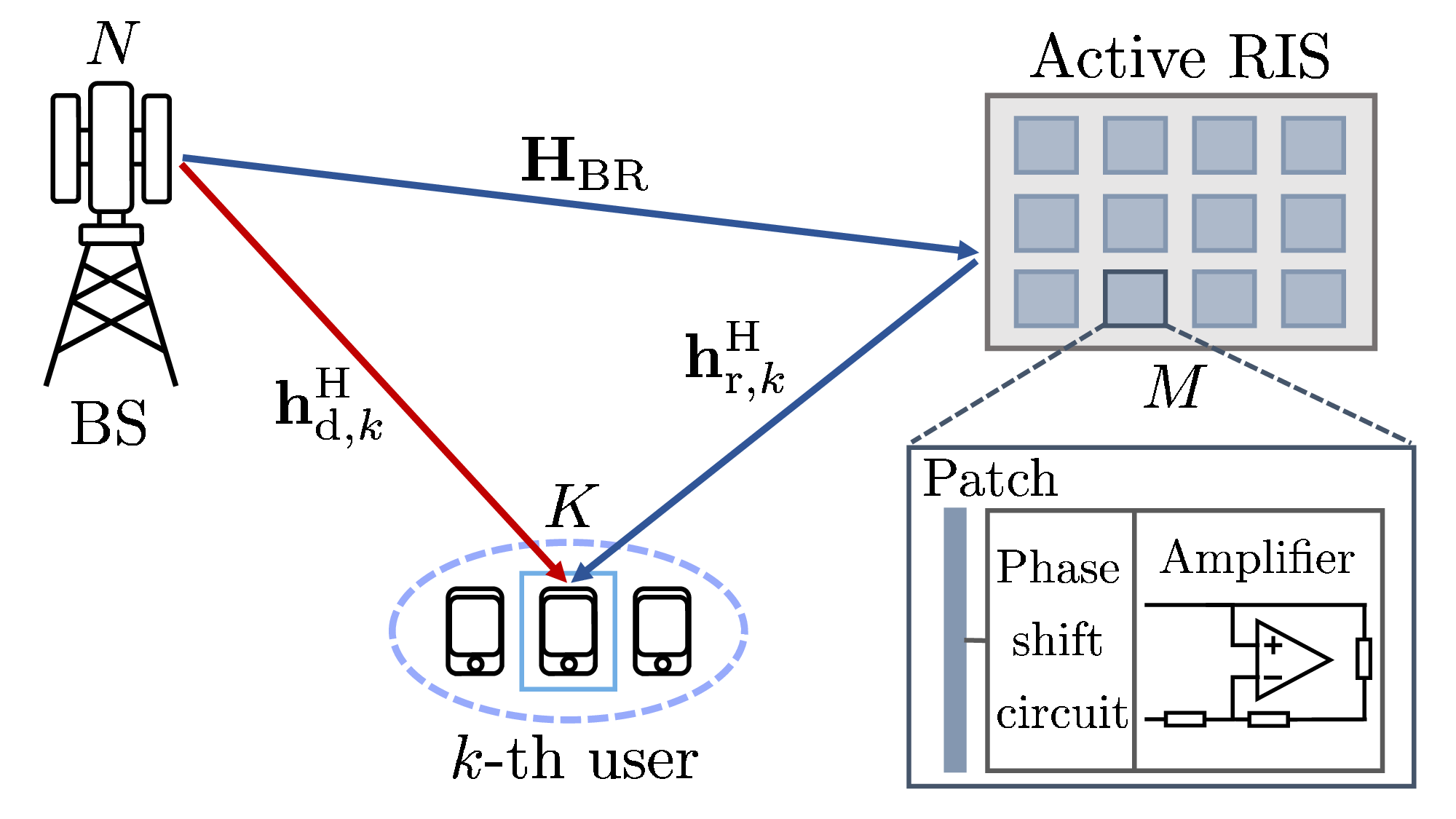}
    \caption{An active RIS-aided downlink system with $N$ BS antennas, $M$ RIS elements, and $K$ single-antenna users.} 
    \label{Fig:System_model}
\end{figure}
The transmit data symbol vector for $K$ users is $\bs = \left[ s_1, \ldots, s_K \right]^\mathrm{T} \in \mathbb{C}^{K \times 1}$ that satisfies $\mathbb{E} \left[ \bs\bs^\mathrm{H} \right] = \bI_K$.
The transmit beamformer is defined as $\bF = \left[ \bff_1, \ldots, \bff_K \right] \in \mathbb{C}^{N \times K}$, which satisfies $\left\Vert \bF \right\Vert_\mathrm{F}^2 \le P_\mathrm{B}$ with the downlink transmit power constraint $P_\mathrm{B}$.
The reflection coefficient matrix of the RIS is denoted by $\boldsymbol{\Lambda} = \operatorname{diag}(\bw) \in \mathbb{C}^{M \times M}$ with $\bw = \left[ w_1, \ldots, w_M \right]^\mathrm{T} \in \mathbb{C}^{M \times 1}$ where each element has a power limit, i.e., $|w_m|^2 \le a_{\mathrm{max}}$ for $m = 1, \cdots, M$, with the maximum amplification gain $a_{\mathrm{max}}$.
Then, the received signal at the $k$-th user is \cite{Active2, Active_Robust1}
\begin{equation}\label{received signal}
    y_k = \sum_{i=1}^K \left( \bh_{\mathrm{d}, k}^\mathrm{H} + \bh_{\mathrm{r}, k}^\mathrm{H} \boldsymbol{\Lambda} \bH_{\mathrm{BR}}\right) \bff_i s_i + \bh_{\mathrm{r}, k}^\mathrm{H} \boldsymbol{\Lambda} \bz + n_k,
\end{equation}
where $\bh_{\mathrm{d}, k}^\mathrm{H} \in \mathbb{C}^{1 \times N}, \bh_{\mathrm{r}, k}^\mathrm{H} \in \mathbb{C}^{1 \times M}$, and $\bH_{\mathrm{BR}} \in \mathbb{C}^{M \times N}$ denote the channels from the BS to the $k$-th user, from the RIS to the $k$-th user, and from the BS to the RIS, respectively.
The additive white Gaussian noise (AWGN) at the $k$-th user is defined as $n_k \sim \cC\cN \left( 0, \sigma_k^2 \right)$ with the noise variance $\sigma_k^2$.
Additionally, the AWGN at the RIS is modeled as $\bz \sim \cC\cN \left( \mathbf{0}_M, \sigma_\mathrm{z}^2 \bI_M \right)$ with the noise variance $\sigma_\mathrm{z}^2$, which is neglected at the passive RIS.
Since the thermal noise $\bz$ can be also amplified along with the incident signal, it cannot be ignored at the active RIS.
Due to deploying the active components in the reflective elements, the power consumption of the RIS is also considered as 
\begin{equation}
    \mathbb{E} \left[ \left\Vert \boldsymbol{\Lambda}   \left( \bH_{\mathrm{BR}} \bF \bs + \bz \right) \right\Vert_2^2 \right] = \left\Vert \boldsymbol{\Lambda} \bH_{\mathrm{BR}} \bF \right\Vert_\mathrm{F}^2 + \sigma_\mathrm{z}^2 \left\Vert \bw \right\Vert_2^2 \le P_\mathrm{R},
\end{equation}
with the RIS power constraint $P_\mathrm{R}$ assuming $\bs$ and all noises are independent, which is generally true in practice.

Due to the lack of baseband signal processing capability at the RIS, it is challenging to estimate the individual RIS-related channels accurately.
Accordingly, we suppose that there are estimation errors in the BS-user and the RIS-user channels.
For the $k$-th user, the channels are denoted by $\bh_{\mathrm{d}, k} = \widehat{\bh}_{\mathrm{d}, k} + \Delta \bh_{\mathrm{d}, k}$ and $\bh_{\mathrm{r}, k} = \widehat{\bh}_{\mathrm{r}, k} + \Delta \bh_{\mathrm{r}, k}$ where $\widehat{\bh}_{\mathrm{d}, k}$ and
$\widehat{\bh}_{\mathrm{r}, k}$ represent the estimated channels, and
$\Delta \bh_{\mathrm{d}, k}$ and $\Delta \bh_{\mathrm{r}, k}$ denote the estimation errors.
We adopt the statistical CSI error model \cite{Robust1, Robust2}, which follows the complex Gaussian distribution\footnote{The statistical characteristics of the CSI estimation error depend on the channel estimation method used. Specifically, when using the least squares or linear minimum mean square error (LMMSE) based estimation method, the estimation error follows the complex Gaussian distribution  \cite{Gaussian}.}, i.e., $\Delta \bh_{\mathrm{d}, k} \sim \cC\cN \left( \mathbf{0}_N, \sigma_{\mathrm{d},k}^2 \bI_N \right)$ and $\Delta \bh_{\mathrm{r}, k} \sim \cC\cN \left( \mathbf{0}_M, \sigma_{\mathrm{r}, k}^2 \bI_M \right)$ with the error variances $\sigma_{\mathrm{d},k}^2$ and $\sigma_{\mathrm{r}, k}^2$.
Note that we assume the BS can have accurate knowledge of $\bH_{\mathrm{BR}}$ \cite{Robust3, Robust4, Robust5}.
This is possible because both the BS and the RIS are typically installed at high locations, ensuring a line-of-sight path between them. In addition, since their positions are fixed and known in advance, the geometric parameters that constitute the channel can be accurately determined. Consequently, the BS-RIS channel can be estimated with greater precision than the other channels.

\section{Proposed Robust Transmission Design}
In this section, we propose a robust transmission design for active RIS-aided MU-MISO systems under the statistical CSI error model.
We first formulate an optimization problem to maximize the sum of the minimum achievable rate, which is obtained to address the estimation errors.
To tackle the non-concavity of the objective function, its lower bound is derived, which is concave with respect to each variable.
Then, we apply the AO method to solve the problem efficiently.

\subsection{Problem Formulation}
Since there are channel estimation errors, the sum of the achievable rate for each user cannot be directly used to optimize the beamformer at the BS and the reflection coefficients at the RIS.
Therefore, we aim to maximize the sum of the minimum achievable rate, $R_\mathrm{sum}$, which is derived to take the worst-case into account, by jointly optimizing the transmit beamformer at the BS and the reflection coefficients at the RIS.
The problem is formulated as follows
\begin{subequations}
    \begin{align}
        \left( \mathrm{P1} \right): \max_{\bF, \bw} ~& R_\mathrm{sum} = \sum_{k=1}^K R_{\mathrm{min}, k} \label{objective func.} \\ \mathrm{s.t.} \ & \Vert \bF \Vert_\mathrm{F}^2 \le P_\mathrm{B}, \\& \Vert \boldsymbol{\Lambda} \bH_{\mathrm{BR}} \bF \Vert_\mathrm{F}^2 + \sigma_\mathrm{z}^2 \Vert\bw\Vert_2^2 \le P_\mathrm{R}, \label{coupled} \\& \vert w_m \vert^2 \le a_{\mathrm{max}}, m = 1, \cdots, M,
    \end{align}
\end{subequations}
where $R_{\mathrm{min}, k}$ denotes the minimum achievable rate of the $k$-th user.
Under the statistical CSI error model, $R_{\mathrm{min}, k}$ is given~by
\begin{equation}\label{Min rate}
    R_{\mathrm{min}, k} = \log_2 \left( 1 + \frac{\vert \widehat{\bh}_k^\mathrm{H} \bff_k \vert^2}{\Gamma_k} \right),
\end{equation}
where $\widehat{\bh}_k^\mathrm{H}$ is the effective estimated channel from the BS to the $k$-th user, and $\Gamma_k$ is the interference-plus-noise considering the statistical characteristics of the estimation errors, which are
\begin{align}
&\widehat{\bh}_k^\mathrm{H} =  \widehat{\bh}_{\mathrm{d}, k}^\mathrm{H} +  \widehat{\bh}_{\mathrm{r}, k}^\mathrm{H} \boldsymbol{\Lambda} \bH_\mathrm{BR}, \\ &\Gamma_k = \sum_{i \neq k}^K \left\vert \widehat{\bh}_k^\mathrm{H} \bff_i \right\vert^2 + \sum_{i=1}^K ( \sigma_{\mathrm{d}, k}^2 \Vert\bff_i \Vert_2^2 + \sigma_{\mathrm{r}, k}^2 \Vert \boldsymbol{\Lambda} \bH_{\mathrm{BR}} \bff_i \Vert_2^2 ) \nonumber \\ & \qquad + \sigma_\mathrm{z}^2 \left\Vert \widehat{\bh}_{\mathrm{r},k}^\mathrm{H} \boldsymbol{\Lambda} \right\Vert_2^2 + \sigma_\mathrm{z}^2 \sigma_{\mathrm{r}, k}^2 \Vert \bw \Vert_2^2 + \sigma_k^2.
\end{align}
The detailed derivation of $R_{\mathrm{min}, k}$, including that of $\Gamma_k$, is provided in Appendix \ref{Appendix A}.
Note that the problem (P1) is non-convex since the objective function is not jointly concave with respect to $\bF$ and $\bw$, and the constraint \eqref{coupled} is non-convex due to the coupling of the variables.

\subsection{Problem Transformation} \label{Transform}
To handle the non-concave objective function of (P1), we first derive a tractable lower bound of $R_{\mathrm{min}, k}$ and utilize it to transform the problem (P1) into the sum of the lower bound maximization problem. Specifically, in an imperfect CSI scenario, we prove that the weighted MSE, where the expectation is taken with respect to the transmitted symbol, noises, and estimation errors, serves as a lower bound of the minimum achievable rate.

\begin{proposition} \label{prop.lower bound}
    For any $v_k \in \mathbb{R}_{++}$ and $u_k \in \mathbb{C}$, the following inequality holds
    \begin{equation}\label{lower bound}
        \log_2 \left( 1 + \frac{\vert \widehat{\bh}_k^\mathrm{H} \bff_k \vert^2}{\Gamma_k} \right) \ge \log_2 v_k - v_k e_k(u_k, \bF, \bw) + c,
    \end{equation}
    where $e_k(u_k, \bF, \bw) = \mathbb{E}\left[ (s_k - u_k^\mathrm{H} y_k)(s_k - u_k^\mathrm{H} y_k)^\mathrm{H} \right]$ is the MSE for the $k$-th user, and $c$ is constant, which are given by
    \begin{align}
        &e_k(u_k, \bF, \bw) = \left\vert 1 - u_k^\mathrm{H} \widehat{\bh}_k^\mathrm{H} \bff_k \right\vert^2 + \sum_{i \neq k}^K \left\vert u_k^\mathrm{H} \widehat{\bh}_k^\mathrm{H} \bff_i \right\vert^2 \nonumber \\& \qquad \ + \sum_{i=1}^K \left( \sigma_{\mathrm{d}, k}^2 \left\Vert u_k^\mathrm{H} \bff_i \right\Vert_2^2 + \sigma_{\mathrm{r}, k}^2 \left\Vert u_k^\mathrm{H} \boldsymbol{\Lambda} \bH_{\mathrm{BR}} \bff_i \right\Vert_2^2 \right) \nonumber \\& \qquad \ + \sigma_\mathrm{z}^2 \left\Vert u_k^\mathrm{H} \widehat{\bh}_{\mathrm{r},k}^\mathrm{H} \boldsymbol{\Lambda} \right\Vert_2^2 + \sigma_\mathrm{z}^2 \sigma_{\mathrm{r}, k}^2 \left\Vert u_k^\mathrm{H} \bw \right\Vert_2^2 + \sigma_k^2 \vert u_k \vert^2, \\ &c = \log_2 e - \log_2(\log_2 e).
    \end{align}
\end{proposition}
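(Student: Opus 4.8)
The plan is to recognize the claimed inequality as a WMMSE-type reformulation, in which $u_k$ plays the role of an auxiliary receive filter and $v_k$ that of an MSE weight, with equality holding precisely when both are chosen optimally. I would split the argument into one statistical computation (the MSE and its minimizer) and one purely algebraic scalar inequality (a tangent bound for $\log_2$), and then chain them. Throughout, $u_k$ is treated as a scalar and $a \triangleq \widehat{\bh}_k^\mathrm{H}\bff_k$.

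First I would verify that the stated expression for $e_k(u_k,\bF,\bw)$ is indeed $\mathbb{E}[(s_k - u_k^\mathrm{H} y_k)(s_k - u_k^\mathrm{H} y_k)^\mathrm{H}]$. Substituting \eqref{received signal} and expanding, the cross terms among the data symbols, the RIS noise $\bz$, the receiver noise $n_k$, and the two channel errors $\Delta\bh_{\mathrm{d},k},\Delta\bh_{\mathrm{r},k}$ all vanish by mutual independence and zero mean. Using $\mathbb{E}[\Delta\bh_{\mathrm{d},k}\Delta\bh_{\mathrm{d},k}^\mathrm{H}]=\sigma_{\mathrm{d},k}^2\bI_N$ and $\mathbb{E}[\Delta\bh_{\mathrm{r},k}\Delta\bh_{\mathrm{r},k}^\mathrm{H}]=\sigma_{\mathrm{r},k}^2\bI_M$, these are exactly the computations already carried out for $\Gamma_k$ in Appendix~\ref{Appendix A}; in particular they give $\mathbb{E}[y_k s_k^*]=\widehat{\bh}_k^\mathrm{H}\bff_k$ and $\mathbb{E}[|y_k|^2]=|\widehat{\bh}_k^\mathrm{H}\bff_k|^2+\Gamma_k$. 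Collecting terms then yields the compact form $e_k = |1-u_k^\mathrm{H}\widehat{\bh}_k^\mathrm{H}\bff_k|^2 + |u_k|^2\Gamma_k$, which coincides with the stated expression once $\Gamma_k$ is expanded.

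Next I would minimize $e_k$ over the scalar $u_k$. Completing the square (equivalently, setting the Wirtinger derivative $\partial e_k/\partial u_k^* = 0$) gives the MMSE filter $u_k^\star = a/(|a|^2+\Gamma_k)$ and the minimum value $e_k^\star = \Gamma_k/(|a|^2+\Gamma_k) = 1/\!\left(1+|a|^2/\Gamma_k\right)$. Hence $-\log_2 e_k^\star = R_{\mathrm{min},k}$, and since $e_k(u_k)\ge e_k^\star$ for every $u_k$, monotonicity of $-\log_2$ gives $-\log_2 e_k(u_k)\le R_{\mathrm{min},k}$ for all $u_k$. The remaining ingredient is the scalar tangent inequality: for any $t>0$ and $v_k>0$, $\log_2 v_k - v_k t + c \le -\log_2 t$ with $c=\log_2 e-\log_2(\log_2 e)$. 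This holds because $v_k\mapsto \log_2 v_k - v_k t$ is concave and maximized at $v_k=1/(t\ln 2)$, where its value equals $-\log_2 t - c$; I would confirm the constant by direct substitution using $1/\ln 2 = \log_2 e$ and $\log_2(\ln 2)=-\log_2(\log_2 e)$.

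Applying the tangent bound with $t=e_k(u_k)$ and chaining with the MMSE step gives $\log_2 v_k - v_k e_k(u_k) + c \le -\log_2 e_k(u_k) \le R_{\mathrm{min},k}$, which is exactly the asserted inequality. The only delicate part is the bookkeeping in the MSE expansion, specifically confirming that the RIS-noise/RIS-error interaction produces the $\sigma_\mathrm{z}^2\sigma_{\mathrm{r},k}^2\|\bw\|_2^2$ term and that every other cross term cancels; however, this is routine and identical to the derivation of $\Gamma_k$ in Appendix~\ref{Appendix A}, so no new difficulty arises there.
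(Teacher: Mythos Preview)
Your proposal is correct and follows essentially the same WMMSE-type argument as the paper: identify the MMSE receiver $u_k^\star$ so that $e_k(u_k^\star)=\Gamma_k/(|\widehat{\bh}_k^\mathrm{H}\bff_k|^2+\Gamma_k)$, and combine it with the scalar optimization in $v_k$ to recover $R_{\mathrm{min},k}$ as the maximum of the right-hand side. The only organizational difference is that the paper treats $(v_k,u_k)$ jointly via first-order conditions and a concavity chain $f(v_k^\star,u_k^\star)\ge f(v_k,u_k^\star)\ge f(v_k,u_k)$, whereas you cleanly separate the two steps as a tangent bound $\log_2 v_k - v_k t + c \le -\log_2 t$ followed by $-\log_2 e_k(u_k)\le -\log_2 e_k^\star=R_{\mathrm{min},k}$; your version also makes the origin of the constant $c$ and the compact identity $e_k(u_k)=|1-u_k^\mathrm{H}\widehat{\bh}_k^\mathrm{H}\bff_k|^2+|u_k|^2\Gamma_k$ explicit, which the paper leaves implicit.
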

\begin{proof}
    Let us denote the function $f(v_k, u_k) = \log_2 v_k - v_k e_k(u_k, \bF, \bw) + c$ and the point $(v_k^\star, u_k^\star)$ that satisfies the first-order optimality condition, i.e.,
    \begin{equation}
        \frac{\partial f(v_k^\star, u_k^\star)}{\partial v_k} = 0, \    
        \frac{\partial f(v_k^\star, u_k^\star)}{\partial u_k} = 0.
    \end{equation}
    This point is computed as follows 
    \begin{equation}\label{opt.v and u}
        v_k^{\star} = \frac{1}{\ln2} \left( 1 + \frac{|\widehat{\bh}_k^\mathrm{H} \bff_k|^2}{\Gamma_k} \right), u_k^{\star} = \frac{\widehat{\bh}_k^\mathrm{H} \bff_k}{|\widehat{\bh}_k^\mathrm{H} \bff_k|^2 + \Gamma_k}.            
    \end{equation}
    By using the concavity of the function $f(v_k, u_k)$, we can show $f(v_k^\star, u_k^\star) \ge f(v_k, u_k^\star) \ge f(v_k, u_k)$.
    Applying \eqref{opt.v and u} gives $f(v_k^\star, u_k^\star) = \log_2 \left( 1 + \frac{|\widehat{\bh}_k^\mathrm{H} \bff_k|^2}{\Gamma_k} \right)$, which finishes the proof. 
\end{proof}

\noindent Note that, different from the conventional weighted MMSE method as in \cite{WMMSE}, we consider the CSI estimation errors and derive the relationship between $R_{\mathrm{min}, k}$ and the weighted MSE for each user.

By exploiting the lower bound derived in Proposition~\ref{prop.lower bound}, the problem (P1) can be transformed as
\begin{subequations}
    \begin{align}
        \left( \mathrm{P2} \right): \max_{\bv, \bu, \bF, \bw} &\sum_{k=1}^K  \log_2 v_k - v_k e_k(u_k, \bF, \bw) \label{new objec.} \\ \mathrm{s.t.} \ & \Vert \bF \Vert_\mathrm{F}^2 \le P_\mathrm{B}, \\& \Vert \boldsymbol{\Lambda} \bH_{\mathrm{BR}} \bF \Vert_\mathrm{F}^2 + \sigma_\mathrm{z}^2 \Vert\bw\Vert_2^2 \le P_\mathrm{R}, \\& \vert w_m \vert^2 \le a_{\mathrm{max}}, m = 1, \cdots, M,
    \end{align}    
\end{subequations}
where $\bv = \left[ v_1, \cdots, v_K \right]^\mathrm{T} \in \mathbb{R}_{++}^{K \times 1}$ and $\bu = \left[ u_1, \cdots, u_K \right]^\mathrm{T} \in \mathbb{C}^{K \times 1}$ denote the auxiliary variables.

\subsection{Alternating Optimization} \label{AO}
Although we considered the tractable lower bound, still the objective function in (P2) is not jointly concave with respect to $\bv$, $\bu$, $\bF$, and $\bw$.
However, it is concave with respect to each variable when the other variables are fixed; therefore, we adopt the AO method to obtain an effective solution of (P2).
Specifically, we first optimize the auxiliary variables $\bv$ and $\bu$ by fixing $\bF$ and $\bw$.
Since all constraints of (P2) are related to only $\bF$ and $\bw$, the problem (P2) is simplified as
\begin{align}
    \left( \mathrm{P2.1} \right): \max_{\bv, \bu} &\sum_{k=1}^K \log_2 v_k - v_k e_k(u_k, \bF, \bw),
\end{align}
where the optimal solutions $\bv^\star$ and $\bu^\star$ are obtained in the proof of Proposition~\ref{prop.lower bound}.

Then, we optimize the transmit beamformer $\bF$ while assuming the other variables are fixed.
By omitting the terms not related to $\bF$, the problem (P2) is represented as
\begin{subequations}
    \begin{align}
        \left( \mathrm{P2.2} \right): \min_{\bF} &\sum_{k=1}^K  v_k e_k(u_k, \bF, \bw) \label{new func. F} \\ \mathrm{s.t.} \ & \Vert \bF \Vert_\mathrm{F}^2 \le P_\mathrm{B}, \label{const1. F}\\ & \Vert \boldsymbol{\Lambda} \bH_{\mathrm{BR}} \bF \Vert_\mathrm{F}^2 \le P_\mathrm{R} -  \sigma_\mathrm{z}^2 \Vert\bw\Vert_2^2. \label{const2. F}
    \end{align}
\end{subequations}
Note that $e_k(u_k, \bF, \bw)$ can be expressed as the sum of the quadratic form of $\bff_k$ by expanding the square of the $\ell_2$-norms and the absolute values.
Therefore, $e_k(u_k, \bF, \bw)$ is convex with respect to each column of $\bF$, and by the definition of the convex function, it is convex with respect to $\bF$.
Since the objective function \eqref{new func. F} is convex, and all inequality constraints \eqref{const1. F} and \eqref{const2. F} imply convex feasible sets, the problem (P2.2) is the convex optimization problem whose optimal solution $\bF^\star$ can be obtained using the well-known tools such as CVX~\cite{CVX}.

Finally, we optimize the reflection coefficient vector $\bw$ while holding the other variables fixed.
The problem (P2) is expressed~as
\begin{subequations}
    \begin{align}
        \left( \mathrm{P2.3} \right): \min_{\bw} &\sum_{k=1}^K v_k e_k(u_k, \bF, \bw) \label{new func. w}\\
        \mathrm{s.t.} \ & \Vert \boldsymbol{\Lambda} \bH_{\mathrm{BR}} \bF \Vert_\mathrm{F}^2 + \sigma_\mathrm{z}^2 \Vert\bw\Vert_2^2 \le P_\mathrm{R}, \label{const1. w}\\& \vert w_m \vert^2 \le a_{\mathrm{max}}, m = 1, \cdots, M. \label{const2. w}
    \end{align}    
\end{subequations}
Since $e_k(u_k, \bF, \bw)$ can be rewritten in the quadratic form of $\bw$ by expanding the formulas and adjusting the order of the variables, it is convex with respect to $\bw$.
Hence, just as the objective function \eqref{new func. w} is convex with respect to $\bF$, the objective function is convex with respect to $\bw$, and all constraints \eqref{const1. w} and \eqref{const2. w} also imply convex feasible sets.
Therefore, the problem (P2.3) is the convex optimization problem that has the optimal solution~$\bw^\star$.

\subsection{Algorithm Description and Complexity Analysis}
The proposed robust design is summarized in Algorithm~\ref{Alg.1}.
After random initialization that satisfies all the power constraints, $\bv$, $\bu$, $\bF$, and $\bw$ are iteratively obtained by solving the problems (P2.1), (P2.2), and (P2.3).
The algorithm stops when the difference in the sum of the minimum achievable rate becomes less than a threshold $\epsilon$ or the iteration index $\ell$ reaches the maximum value $L$.
As the optimal solution for each variable is iteratively obtained, $R_\mathrm{sum}$ monotonically increases with each iteration.
Additionally, due to the power constraints at the BS and the RIS, $R_\mathrm{sum}$ is upper-bounded, which guarantees the convergence of Algorithm \ref{Alg.1}.

\alglanguage{pseudocode}
\begin{algorithm}[t] 
   \caption{Proposed robust transmission design for active RIS-aided MU-MISO systems} \label{Alg.1}
   \textbf{Initialize}
   \begin{algorithmic}[1]
      \State Set $L$, $ \epsilon > 0$, and $R_\mathrm{sum}^{(0)} = 0$
      \State Initialize $\bF^{(0)}$ and $\bw^{(0)}$ 
   \end{algorithmic}
   \textbf{Iterative update}
   \begin{algorithmic}[1]
      \addtocounter{ALG@line}{+2}
      \For{$\ell=1, \cdots, L$}
      \State Calculate $\bv^{\star(\ell)}$ and $\bu^{\star(\ell)}$ by \eqref{opt.v and u}
      \State Obtain $\bF^{\star(\ell)}$ from (P2.2)
      \State Obtain $\bw^{\star(\ell)}$ from (P2.3)
      \State Calculate $R_\mathrm{sum}^{(\ell)}$ by \eqref{Min rate}
      \If {$\left\vert R_\mathrm{sum}^{(\ell)}-R_\mathrm{sum}^{(\ell-1)} \right\vert < \epsilon$}
      \State Break
      \EndIf
      \EndFor
   \end{algorithmic}
\end{algorithm}

To analyze the complexity of the proposed algorithm, we compute the complexities of solving the problems (P2.1), (P2.2), and (P2.3), respectively. For (P2.1), the optimal solutions $\bv^\star$ and $\bu^\star$ are obtained by \eqref{opt.v and u} with the complexity $\mathcal{O} \left( M N K^2 \right)$. Then, since (P2.2) is the quadratically constrained quadratic program (QCQP) problem with $NK$ variables and two quadratic constraints, the required complexity to achieve $\bF^\star$ is $\mathcal{O} \left( N^3 K^3 \ln \frac{4 V}{\eta} \right)$ according to \cite{Complexity}. Note that $\eta$ denotes the solution accuracy and $V$ is a constant greater than or equal to the objective function and constraint functions. Similarly, (P2.3) is the QCQP problem with $M$ variables and $M+1$ quadratic constraints, and the required complexity to compute $\bw^\star$ is given by $\mathcal{O}\left( {M}^{3.5} \ln \frac{2 (M+1) V}{\eta} \right)$. Therefore, the overall maximum complexity of Algorithm 1 becomes $\mathcal{O}\left( L \left( M N K^2 + N^{3} K^{3} \ln \frac{4 V}{\eta} + {M}^{3.5} \ln \frac{2 (M+1) V}{\eta} \right) \right)$.

\section{Numerical Results}
In this section, we provide numerical results that demonstrate the performance of our proposed design.
We consider that the BS with a uniform planar array (UPA) of $N = 4 \times 4$ antennas serves $K = 4$ users.
The active RIS is equipped with a UPA of $M = 8 \times 8$ or $M = 12 \times 8$ elements.
The BS and the RIS are located at $(0 \ \mathrm{m}, 0 \ \mathrm{m})$ and $(500 \ \mathrm{m}, 20 \ \mathrm{m})$, and each user is randomly distributed in a circle centered at $(500 \ \mathrm{m}, 0 \ \mathrm{m})$ with a radius $10 \ \mathrm{m}$.
With the noise power spectral density $-174 \ \mathrm{dBm/Hz}$ and the bandwidth $10 \ \mathrm{MHz}$, the noise variances at the RIS and the $k$-th user are set to $\sigma_\mathrm{z}^2 = \sigma_k^2 = -104 \ \mathrm{dBm}$.
The RIS power constraint is set to $P_\mathrm{R} = -10 \ \mathrm{dBm}$ \cite{Active2}.

We assume the Rician fading for the estimated channels $\widehat{\bh}_{\mathrm{d}, k}$, $ \widehat{\bh}_{\mathrm{r}, k}$, and $\bH_{\mathrm{BR}}$ as described in \cite{Channel}, and the path-loss is set to $-30 \ \mathrm{dB}$ at the distance of $1 \ \mathrm{m}$.
The path-loss exponents of the BS-user, RIS-user, and BS-RIS channels are set to $4.0$, $2.3$, and $2.5$.
The CSI estimation error variances are denoted by $\sigma_{\mathrm{d}, k}^2 = \frac{\delta^2}{N} \Vert \widehat{\bh}_{\mathrm{d}, k} \Vert_2^2$ and $ \sigma_{\mathrm{r},k}^2 = \frac{\delta^2}{M} \Vert \widehat{\bh}_{\mathrm{r}, k} \Vert_2^2$ where $\delta \in [0, 1]$ represents the relative amount of CSI uncertainty.

To evaluate the performance of the proposed robust design, we adopt the sum-rate as the performance metric, which is given by 
\begin{align}\label{Sum-rate}
    \sum_{k=1}^K \log_2 \left( 1 + \frac{\vert \bh_k^\mathrm{H} \bff_k \vert^2}{\sum_{i \neq k}^K \vert \bh_k^\mathrm{H} \bff_i \vert^2 + \sigma_\mathrm{z}^2 \Vert \bh_{\mathrm{r},k}^\mathrm{H} \boldsymbol{\Lambda} \Vert_2^2 +  \sigma_k^2} \right),
\end{align}
where $\bh_k^\mathrm{H} = \bh_{\mathrm{d}, k}^\mathrm{H} + \bh_{\mathrm{r}, k}^\mathrm{H} \boldsymbol{\Lambda} \bH_{\mathrm{BR}}$ denotes the effective channel from the BS to the $k$-th user.
For performance comparison, the perfect CSI and the non-robust cases are considered as the benchmarks.
In the perfect CSI case, since all channels are perfectly known, the beamformer $\bF$ at the BS and the reflection coefficients $\bw$ at the RIS are jointly optimized to maximize the sum-rate in \eqref{Sum-rate}.
In the non-robust case, $\bF$ and $\bw$ are optimized as in the perfect CSI case using the estimated channels $\widehat{\bh}_{\mathrm{d}, k}$ and $\widehat{\bh}_{\mathrm{r}, k}$, not the perfect CSI.
For both baselines, we reformulate the problem by replacing the objective function with its lower bound and apply the AO method as in the proposed method.

Fig. \ref{Fig:Power_03} depicts the average sum-rate according to the downlink transmit power constraint $P_{\mathrm{B}}$ with the CSI uncertainty $\delta = 0.3$.
We set $a_{\mathrm{max}} = 30 \ \mathrm{dB}$ as the maximum amplification gain for each active RIS element.\footnote{Considering the practical hardware architecture of the reflective amplifier as in \cite{Amp_gain}, we assume that $a_\mathrm{max}$ can be set to less than or equal to $40 \ \mathrm{dB}$ for simulations.}
It can be observed that the proposed robust design outperforms the non-robust case, and the performance gap becomes larger as $P_\mathrm{B}$ increases.
These results imply that the considered worst-case maximization can bring noticeable performance gain under the presence of estimation errors.
In contrast, the non-robust case simply exploits only the estimated channels, leading to severe performance degradation.

\begin{figure}[t]
    \centering
    \includegraphics[width=0.7 \columnwidth]{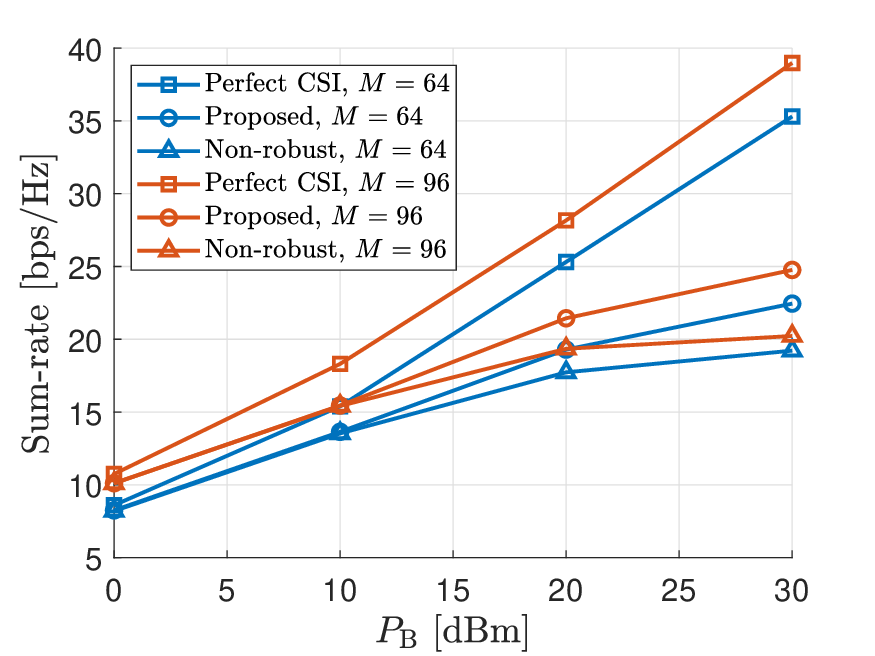}
    \caption{Average sum-rate performance according to $P_\mathrm{B}$ with $a_{\mathrm{max}} = 30 \ \mathrm{dB}$ and $\delta = 0.3$.} 
    \label{Fig:Power_03}
\end{figure}


Fig. \ref{Fig:Uncertainty} compares the average sum-rate according to the CSI uncertainty $\delta$ with $P_\mathrm{B} = 30 \ \mathrm{dBm}$ and $a_\mathrm{max} = 30 \ \mathrm{dB}$.
As $\delta$ increases, the CSI estimation error variances become larger, which makes the impact of channel mismatch more significant.
Therefore, the sum-rate performance of the proposed design and the non-robust case decrease.
Still, the proposed robust design shows higher sum-rate than the non-robust case by taking the estimation errors into account, and the performance gap becomes more pronounced for large $\delta$.
The sum-rate of the perfect CSI case slightly increases with $\delta$ because, according to our model explained in Section II, $\delta$ not only affects the CSI estimation error variances but also the average channel gains as $\bh_{\mathrm{d}, k} = \widehat{\bh}_{\mathrm{d}, k} + \Delta \bh_{\mathrm{d}, k}$ and $\bh_{\mathrm{r}, k} = \widehat{\bh}_{\mathrm{r}, k} + \Delta \bh_{\mathrm{r}, k}$.

\begin{figure}[t]
    \centering
    \includegraphics[width=0.7 \columnwidth]{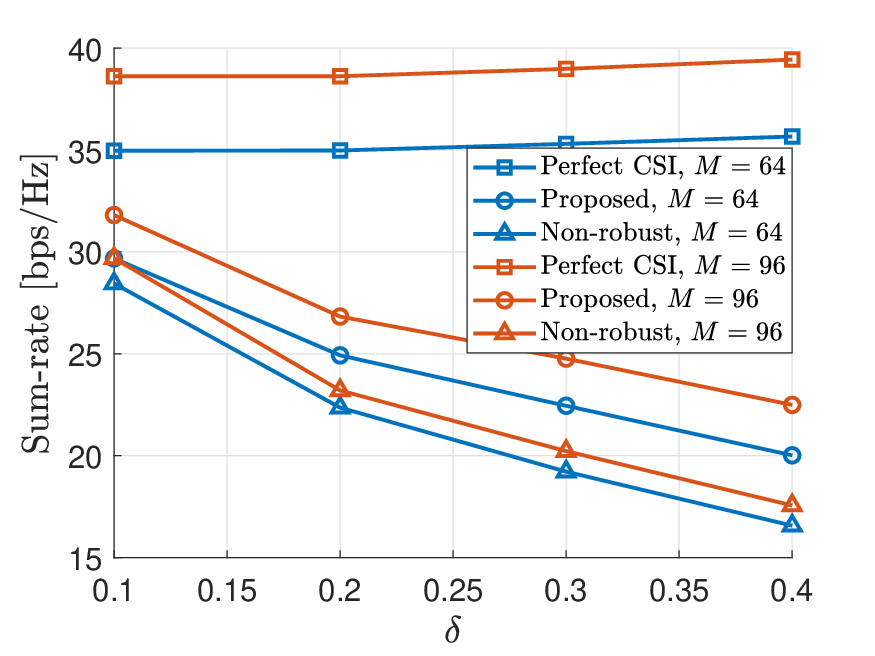}
    \caption{Average sum-rate performance according to $\delta$ with $P_\mathrm{B} = 30 \ \mathrm{dBm}$ and $a_{\mathrm{max}} = 30 \ \mathrm{dB}$.} 
    \label{Fig:Uncertainty}
\end{figure}

Fig.~\ref{Fig:a_max_03} shows the average sum-rate according to  $a_{\mathrm{max}}$ with $P_{\mathrm{B}} = 30 \ \mathrm{dBm}$ and $\delta = 0.3$.
The proposed robust design outperforms the non-robust case, especially in the high $a_\mathrm{max}$ regime.
This is because, for large $a_{\mathrm{max}}$, the estimation error of the RIS-user channel is further amplified in the active RIS, resulting in a larger mismatch of the effective channel.
Similar to Figs.~\ref{Fig:Power_03} and \ref{Fig:Uncertainty}, when the number of $M$ increases, the proposed design achieves much better sum-rate performance than the non-robust case.

\begin{figure}[t]
    \centering
    \includegraphics[width=0.7 \columnwidth]{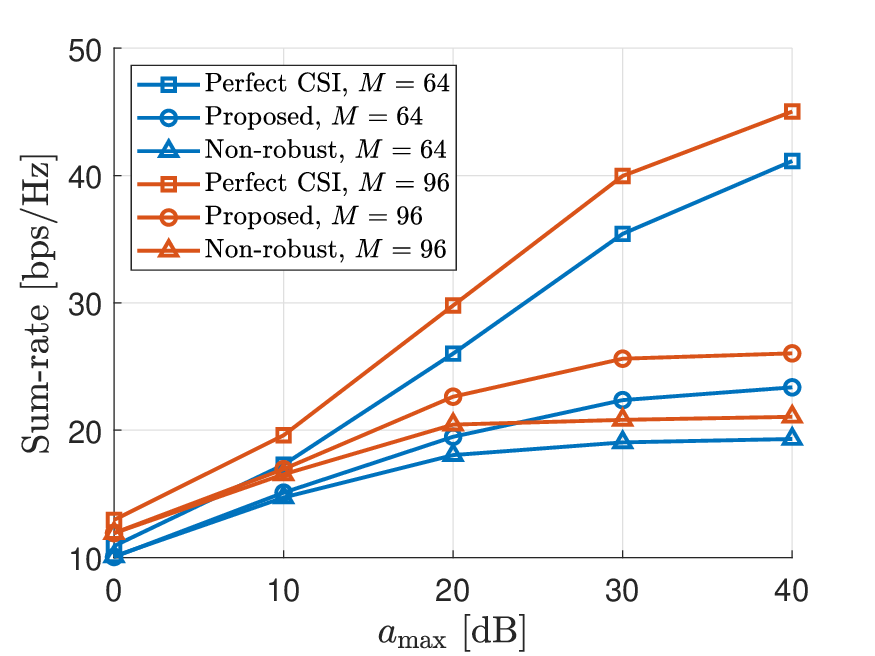}
    \caption{Average sum-rate performance according to $a_\mathrm{max}$ with $P_{\mathrm{B}} = 30 \ \mathrm{dBm}$ and $\delta = 0.3$.} 
    \label{Fig:a_max_03}
\end{figure}

\section{Conclusion}
In this paper, we investigated the robust transmission design for active RIS-aided MU-MISO systems with imperfect CSI.
We considered the worst-case scenario by deriving the minimum achievable rate of each user under the statistical CSI error model.
The problem was formulated to maximize the sum of the minimum achievable rate, and we reformulated the problem by replacing the objective function with the tractable lower bound.
Since this lower bound is concave with respect to each variable when the other variables are assumed to be fixed, we adopted the AO method to solve the problem.
Numerical results showed that the proposed robust design outperforms the non-robust case, especially when the channel uncertainty is large.
Possible future research directions can include developing the robust transmission design with a bounded CSI error model.
Additionally, it is worthwhile to investigate robust transmission designs that take into account the CSI estimation errors in the BS-RIS-user channels, or more generally, in the effective channels.

\begin{appendices} 
\section{Derivation of $R_{\mathrm{min}, k}$} \label{Appendix A}
For notational simplicity, let us define the set that contains all channel information known at the BS as $\mathbb{S}_{\mathrm{Ch}} \triangleq \lbrace \lbrace \widehat{\bh}_{\mathrm{d}, k}, \widehat{\bh}_{\mathrm{r}, k} \rbrace_{k=1}^K, \bH_\mathrm{BR} \rbrace$. 
At the $k$-th user, the conditional mutual information between the transmitted symbol and the received signal is defined as $\mathcal{I} \left(s_k ; y_k \vert  \mathbb{S}_{\mathrm{Ch}} \right)$, which can be represented by differential entropies as 
\begin{equation}
    \mathcal{I} \left( s_k ; y_k \vert  \mathbb{S}_{\mathrm{Ch}} \right) = \mathcal{H} \left( s_k \vert  \mathbb{S}_{\mathrm{Ch}} \right) - \mathcal{H} \left( s_k \vert y_k, \mathbb{S}_{\mathrm{Ch}} \right).
\end{equation}
The first term $\mathcal{H} \left( s_k \vert  \mathbb{S}_{\mathrm{Ch}} \right)$ is simplified to $\log_2 \pi e$ by assuming $s_k$ as the Gaussian symbol.
Then, we can derive an upper bound of the second term as \cite{Appendix}
\begin{align}\label{upper bound}
    \mathcal{H} \left( s_k \vert y_k, \mathbb{S}_{\mathrm{Ch}} \right) & \le \log_2 \left( \pi e \mathbb{E}\left[ (s_k - \alpha_k y_k)(s_k - \alpha_k y_k)^\mathrm{H} \right] \right) \nonumber \\ &= \log_2 \left( \pi e \left( 1 - \frac{\mathbb{E}\left[ s_k y_k^\mathrm{H} \right] \mathbb{E}\left[ s_k^\mathrm{H} y_k \right] }{\mathbb{E}\left[ y_k y_k^\mathrm{H} \right]} \right) \right),
\end{align}
where $\alpha_k$ is set to the MMSE weight. To compute $\mathbb{E}\left[ s_k y_k^\mathrm{H} \right]$ and $\mathbb{E}\left[ y_k y_k^\mathrm{H} \right]$, different from \cite{Appendix}, we need to take into account our system model including the CSI error model. Then, the received signal $y_k$ in~\eqref{received signal} can be expressed as
\begin{equation}
    y_k = \sum_{i=1}^K \left( \widehat{\bh}_k^\mathrm{H} \bff_i s_i + \Delta \bh_k^\mathrm{H} \bff_i s_i \right) + \widehat{\bh}_{\mathrm{r}, k}^\mathrm{H} \boldsymbol{\Lambda} \bz + \Delta \bh_{\mathrm{r}, k}^\mathrm{H} \boldsymbol{\Lambda} \bz + n_k,
\end{equation}
where $\Delta \bh_k^\mathrm{H} = \Delta \bh_{\mathrm{d}, k}^\mathrm{H} + \Delta \bh_{\mathrm{r}, k}^\mathrm{H} \boldsymbol{\Lambda} \bH_{\mathrm{BR}}$ denotes the estimation error of the effective channel from the BS to the $k$-th user, which follows the complex Gaussian distribution as $\Delta \bh_k \sim \cC\cN \left( \mathbf{0}_N, \sigma_{\mathrm{d}, k}^2 \bI_N + \sigma_{\mathrm{r}, k}^2 \bH_{\mathrm{BR}}^\mathrm{H} \boldsymbol{\Lambda}^\mathrm{H} \boldsymbol{\Lambda} \bH_{\mathrm{BR}} \right)$. By leveraging the fact that the transmitted symbols, noises, and estimation errors are uncorrelated and have zero mean, we can obtain the following results
\begin{align} \label{Correlation}
    &\mathbb{E} \left[ s_k y_k^\mathrm{H} \right] = \bff_k^\mathrm{H} \widehat{\bh}_k, \\& \mathbb{E} \left[ y_k y_k^\mathrm{H} \right] = \sum_{i=1}^K \left\vert \widehat{\bh}_k^\mathrm{H} \bff_i \right\vert^2 + \sum_{i=1}^K \underbrace{\mathbb{E} \left[ \Delta \bh_k^\mathrm{H} \bff_i s_i s_i^\mathrm{H} \bff_i^\mathrm{H} \Delta \bh_k \right]}_{\mathrm{(a)}} \nonumber \\& \quad + \sigma_z^2 \left\Vert \widehat{\bh}_{\mathrm{r},k}^\mathrm{H} \boldsymbol{\Lambda} \right\Vert_2^2 + \underbrace{\mathbb{E} \left[ \Delta \bh_{\mathrm{r}, k}^\mathrm{H} \boldsymbol{\Lambda} \bz \bz^\mathrm{H} \boldsymbol{\Lambda}^\mathrm{H} \Delta \bh_{\mathrm{r}, k} \right]}_{\mathrm{(b)}} + \sigma_k^2. \label{y_k variance}
\end{align}
To achieve closed-form expressions of (a) and (b), the cyclic property of the trace operation is used as follows
\begin{align}
    &\mathbb{E} \left[ \Delta \bh_k^\mathrm{H} \bff_i s_i s_i^\mathrm{H} \bff_i^\mathrm{H} \Delta \bh_k \right] \nonumber \\ &= \mathbb{E} \left[ \operatorname{Tr} \left( \Delta \bh_k \Delta  \bh_k^\mathrm{H} \bff_i s_i s_i^\mathrm{H} \bff_i^\mathrm{H} \right) \right] \nonumber \\ &= \operatorname{Tr} \left( \left( \sigma_{\mathrm{d}, k}^2 \bI_N + \sigma_{\mathrm{r}, k}^2 \bH_{\mathrm{BR}}^\mathrm{H} \boldsymbol{\Lambda}^\mathrm{H} \boldsymbol{\Lambda} \bH_{\mathrm{BR}} \right) \bff_i \bff_i^\mathrm{H} \right) \nonumber \\ &= \operatorname{Tr} \left( \sigma_{\mathrm{d}, k}^2 \bff_i \bff_i^\mathrm{H} \right) + \operatorname{Tr} \left( \sigma_{\mathrm{r}, k}^2 \bff_i^\mathrm{H} \bH_{\mathrm{BR}}^\mathrm{H} \boldsymbol{\Lambda}^\mathrm{H} \boldsymbol{\Lambda} \bH_{\mathrm{BR}} \bff_i \right) \nonumber \\ &= \sigma_{\mathrm{d}, k}^2 ||\bff_i||_2^2 + \sigma_{\mathrm{r}, k}^2 ||\boldsymbol{\Lambda} \bH_{\mathrm{BR}} \bff_i||_2^2, \label{expression (a)} \\ &\mathbb{E} \left[ \Delta \bh_{\mathrm{r}, k}^\mathrm{H} \boldsymbol{\Lambda} \bz \bz^\mathrm{H} \boldsymbol{\Lambda}^\mathrm{H} \Delta \bh_{\mathrm{r}, k} \right] \nonumber \\ &= \mathbb{E} \left[ \operatorname{Tr} \left(\Delta \bh_{\mathrm{r}, k} \Delta \bh_{\mathrm{r}, k}^\mathrm{H} \boldsymbol{\Lambda} \bz \bz^\mathrm{H} \boldsymbol{\Lambda}^\mathrm{H} \right) \right] \nonumber \\& = \operatorname{Tr} \left( \sigma_z^2 \sigma_{\mathrm{r}, k}^2 \boldsymbol{\Lambda} \boldsymbol{\Lambda}^\mathrm{H} \right) \nonumber \\& = \sigma_z^2 \sigma_{\mathrm{r},k}^2 ||\bw||_2^2. \label{expression (b)}
\end{align}
By adopting \eqref{expression (a)} and \eqref{expression (b)} directly, \eqref{y_k variance} can be expressed as
\begin{align}\label{Variance}
    &\mathbb{E} \left[ y_k y_k^\mathrm{H} \right] = \vert \widehat{\bh}_k^\mathrm{H} \bff_k \vert^2 + \underbracea{\sum_{i \neq k}^K \left\vert \widehat{\bh}_k^\mathrm{H} \bff_i \right\vert^2 + \sum_{i=1}^K ( \sigma_{\mathrm{d}, k}^2 \Vert\bff_i \Vert_2^2 } \nonumber \\& \underbraced{ + \sigma_{\mathrm{r}, k}^2 \Vert \boldsymbol{\Lambda} \bH_{\mathrm{BR}} \bff_i \Vert_2^2 ) + \sigma_\mathrm{z}^2 \left\Vert \widehat{\bh}_{\mathrm{r},k}^\mathrm{H} \boldsymbol{\Lambda} \right\Vert_2^2 + \sigma_\mathrm{z}^2 \sigma_{\mathrm{r}, k}^2 \Vert \bw \Vert_2^2 + \sigma_k^2.}_{\triangleq \Gamma_k}
\end{align}
By applying \eqref{Correlation} and \eqref{Variance}, the upper bound of $\mathcal{H} \left( s_k \vert y_k, \mathbb{S}_{\mathrm{Ch}} \right)$ is computed as 
\begin{align}
    \mathcal{H} \left( s_k \vert y_k, \mathbb{S}_{\mathrm{Ch}} \right) \nonumber &\le \log_2 \pi e \left( \frac{\Gamma_k}{\vert \widehat{\bh}_k^\mathrm{H} \bff_k \vert^2 + \Gamma_k} \right) \\ &= \log_2 \pi e - \log_2 \left( 1 + \frac{\vert\widehat{\bh}_k^\mathrm{H} \bff_k \vert^2}{\Gamma_k} \right).
\end{align}
Finally, the lower bound of the conditional mutual information is obtained as
\begin{equation}
    \mathcal{I} \left( s_k ; y_k \vert \mathbb{S}_{\mathrm{Ch}} \right) \ge \log_2 \left( 1 + \frac{\vert\widehat{\bh}_k^\mathrm{H} \bff_k \vert^2}{\Gamma_k} \right).
\end{equation}

\end{appendices}

\bibliographystyle{IEEEtran}
\bibliography{refs_all}

\end{document}